\newtheorem{lem}{Lemma}
\definecolor{orange}{RGB}{255,107,0}
\definecolor{green}{RGB}{0,160,20}
\begin{document}
\title{Joint Optimization of Beamforming, Phase-Shifting and Power Allocation in a Multi-cluster IRS-NOMA Network}
\author{Ximing Xie, Fang Fang,~\IEEEmembership{Member,~IEEE}, and Zhiguo Ding,~\IEEEmembership{Fellow,~IEEE}
\thanks{Ximing Xie and Zhiguo Ding are with School of Electrical and Electronic Engineering, The University of Manchester, M13 9PL, U.K. (e-mail: ximing.xie@manchester.ac.uk and zhiguo.ding@manchester.ac.uk).}
\thanks{Fang Fang  is with the Department of Engineering, Durham University, Durham DH1 3LE, U.K. (e-mail:fang.fang@durham.ac.uk).}
}\maketitle

\begin{abstract}
The combination of non-orthogonal multiple access (NOMA) and intelligent reflecting surface (IRS) is an efficient solution to significantly enhance the energy efficiency of the wireless communication system. In this paper, we focus on a downlink multi-cluster NOMA network, where each cluster is supported by one IRS. We aim to minimize the transmit power by jointly optimizing the beamforming, the power allocation and the phase shift of each IRS. The formulated problem is non-convex and challenging to solve due to the coupled variables, i.e., the beamforming vector, the power allocation coefficient and the phase shift matrix. To address this non-convex problem, we propose an alternating optimization based algorithm. Specifically, we divide the primal problem into the two subproblems for beamforming optimization and phase shifting feasiblity, where the two subproblems are solved iteratively. Moreover, to guarantee the feasibility of the beamforming optimization problem, an iterative algorithm is proposed to search the feasible initial points. To reduce the complexity, we also propose a simplified algorithm based on partial exhaustive search for this system model. Simulation results demonstrate that the proposed alternating algorithm can yield a better performance gain than the partial exhaustive search algorithm, OMA-IRS, and NOMA with random IRS phase shift.
\end{abstract}
\vspace{-0.8em}
\section{Introduction}
The 5G communication system has been commercialized world-widely, and the beyond 5G (B5G) system starts attracting more and more researchers' attention due to its low energy consumption, high spectrum efficiency and massive multi-device interconnections \cite{david20186g,saad2019vision,tariq2019speculative}. In order to satisfy the increasing demand caused by the fast-growing number of users, various techniques, including millimetre wave \cite{jamali2019intelligent}, massive multi-inputs and multi-outputs (MIMO) system \cite{larsson2014massive}, and small cell \cite{guo2016method}, have been investigated and extensively used in practice. As a potential technique of B5G, non-orthogonal multiple access (NOMA) has received widespread attention due to its high spectral efficiency \cite{ding2015application, dai2018survey}. Different from conventional orthogonal multiply access (OMA), such as frequency division multiple access (FDMA), time division multiple access (TDMA), code division multiple access (CDMA), and orthogonal frequency-division multiple access (OFDMA), NOMA allows multiple users to share the same time slot, frequency block and channel code, which dramatically increases the spectral efficiency. In particular, the users in a NOMA network usually adopt successive inference cancellation (SIC) to remove the inference from other NOMA users, which can efficiently improve the signal to interference and noise ratio (SINR) and reception reliability \cite{saito2015performance}. Recently, intelligent reflective surface (IRS) has also been proposed as a potential solution to further improve the performance of wireless networks, including enlarging the communication coverage, and improving transmission robustness. Specifically, the IRS can reflect the electromagnetic wave to extend the cover rage of the base station (BS). It also has the ability to tune the channel by adjusting the phase shift of each element, which will greatly improve the quality of users' received signal\cite{wu2018intelligent}. \par 
\subsection{Related Works}
In literature, extensive research has been carried out for the NOMA technique, which has been combined with various state-of-the-art techniques such as MIMO and orthogonal time-frequency space modulation (OTFS)\cite{sun2018joint, fang2017joint, surabhi2019diversity, tang2019wireless, fang2016energy}. Recently, IRS has emerged as a kind of powerful equipment for wireless communication networks \cite{wu2019towards, di2019smart, wu2019intelligent}. Among these works, IRS was proved as a perfect solution for a wireless communication network, where the channel will be intelligently reconfigured by the IRS\cite{tang2019wireless, tang2019wireless1, ASIRSNOMAMag, FangIRS2020}. \par

Motivated by the benefits from NOMA and IRS, the combination of NOMA and IRS has been recently proposed as a promising solution to improve the communication systems. There have been some ongoing works studying the combination of NOMA and IRS. Some recent research works such as \cite{zhu2019power, ding2020impact} considered a simple scenario where a single IRS serves two users in a downlink NOMA network. In \cite{zhu2019power}, the authors minimized the transmit power at the BS by optimizing beanforming and IRS phase shift and also considered an improved quasi-degradation condition to guarantee that NOMA can achieve the capacity region with high possibility. In \cite{ding2020impact}, the authors analysed two kinds of phase shift designs, namely random phase shifting and coherent phase shifting.

Moreover, there are many works considering an IRS-assisted NOMA network where a signal IRS serves multiple users \cite{fu2019reconfigurable, liu2020ris, mu2019exploiting, zuo2020resource, zeng2020sum}. The problems which have been researched can be divided into two categories, one is about the transmit power minimization \cite{fu2019reconfigurable, liu2020ris} and the other is about the the sum-rate maximization\cite{mu2019exploiting, zuo2020resource, zeng2020sum}. For the transmit power minimization problem, the authors in \cite{mu2019exploiting} minimized the total transmit power by optimizing beamforming vectors of each user and the phase shift design of the IRS in an IRS-empowered downlink NOMA network. \cite{liu2020ris} considered a single IRS assisted downlink NOMA network and adopted reinforcement learning to design the beamforming vectors which minimized the transmit power at the BS. Regarding to the sum rate maximization problem, the authors of \cite{mu2019exploiting} optimized the beamforming design to maximize the sum rate in a downlink MISO IRS aided NOMA system. \cite{zuo2020resource} discussed a multi-channel downlink communications IRS-NOMA framework, where the sum rate of multiple NOMA users served by one IRS was maximized by optimizing resource allocation to each user and jointly considering channel assignment and decoding order. \cite{zeng2020sum} considered an IRS-assisted uplink NOMA system where multiple NOMA users can only transmit data through an IRS to the BS. \par

There are also some works considering a multi-cluster system mode, i.e., users are divided into different clusters \cite{li2019joint, ni2020resource}. In \cite{li2019joint}, the authors discussed a downlink IRS-assisted NOMA network where two types of users named the central user and the cell edge user were assigned to different clusters. Each cluster had one central user, one cell edge user and one IRS serving all users. The authors minimized the transmit power at the BS by jointly optimizing the beamforming vectors of each user and the phase shift design of the IRS. In \cite{ni2020resource}, the authors considered a multi-cluster and multi-BS IRS-aided NOMA network, where each cluster is served by its associated BS and one IRS serves all clusters. The sum rate was minimized by jointly optimizing power allocation and phase shift.\par

\subsection{Motivation and Challenges}
All the above works only consider one IRS. However the channel state of each user is related to its particular surrounding environment. Therefore, one single IRS might not be enough to reconfigure all users' channels simultaneously. Thus, we propose the use of multiple IRSs to assist the users whose channel conditions are bad. One IRS can adjust its phase shift dedicatedly for its associated user to generate a better channel condition. In this paper, we consider a multi-cluster NOMA network, where each cluster has one IRS and the BS generates an unique beam for each cluster to serve all users located in this cluster. \par

With the considered scenario, there are a few challenges which need to be overcome. We consider a multi-user and multi-IRS scenario which increases the number of optimization variables and hence make the optimization more complicated than the case with a single IRS in the network. The joint optimization problem contains three coupled variables, which is a non-convex problem and highly intractable. We divide the primal problem into subproblems and transform them to convex forms through approximations, the feasibility of these subproblems cannot be guaranteed during the transformation. Moreover, due to the high quality of variables, the computing time of algorithms will be extensive.  

\subsection{Contributions}
Different from the above mentioned works \cite{li2019joint, ni2020resource}, in this paper, we adopt a new system model and employ multiple IRSs to assist users. Then, we formulate a non-convex optimization problems which is highly intractable. We propose a novel alternating algorithm to solve this non-convex problem efficiently. Finally, we simplify the system model and propose a low-complexity algorithm, which achieves a reasonable performance. We summarize the contributions as follows:
\begin{itemize}
\item We consider a multi-cluster IRS-NOMA system, where each cluster contains two users served by one IRS. We formulate the transmit power minimization problem with respect to the beamforming vector, the phase shifting matrix of IRSs and the power allocation coefficient of each cluster. Each IRS can accomplish channel reconfiguration according to the channel condition between the BS and the cell edge user it serves, which intuitively yields a better performance than the scenario with the single IRS serving the whole system.

\item The formulated problem is non-convex because three variables are highly coupled together. To solve the proposed optimization problem, we propose an alternating algorithm by decoupling variables. We divide the primal problem into two subproblems. However, the beamforming optimization problem still has two variables coupled together, which causes the intractability. To address this challenge, we first adopt arithmetic and geometric means inequalities to approximately transform the non-convex set to its upper bound which is convex. Then, we use the equivalence between Schur complement larger than zero and the positive semidefinite matrix and successive convex approximation (SCA) to transfer another non-convex constraint to a convex form. Finally, we use an alternating algorithm iteratively solve two subproblems.

\item We introduce some fixed points during the approximation. It is essential to obtain the initial choice of the fixed points to guarantee the feasibility of the beamforming optimization problem. Therefore, we propose a feasible initial points search algorithm, where we introduce an auxiliary variable to force all constraints to be feasible. We minimize this auxiliary variable until it equals to zero. The values of the fixed points when this auxiliary variable equals to zero can be the initial choice of the fixed points for the proposed alternating algorithm.

\item The complexity of the proposed alternating algorithm is high. To reduce the complexity of the proposed algorithm, we simplify the system model where each cluster shares the same power allocation coefficient. With this assumption, the previous problem will be degraded into a simpler one with two coupled variables. We design a partial exhaustive search algorithm to solve this new problem, which has a low complexity. Compared with the alternating algorithm, the complexity is reduced but the performance is still reasonable.
\end{itemize}
\subsection{Organization}
The rest of paper is organized as follows. In Section \uppercase\expandafter{\romannumeral3}, we describe a multi-cluster IRS-assisted NOMA downlink network and formulate a energy minimizing optimization problem. In Section \uppercase\expandafter{\romannumeral3}, the solution to solve the formulated problem is introduced. In Section \uppercase\expandafter{\romannumeral4}, we briefly illustrate the simplified optimization problems and the partial exhaustive search based algorithm. In Section \uppercase\expandafter{\romannumeral5}, we provide the convergence analysis of the algorithms. We also present the simulation results to analyze the performance of the proposed algorithm. Finally, a conclusion is summarised in Section \uppercase\expandafter{\romannumeral6}.

\section{System Model and Problem Formulation}
\subsection{System Model}
\begin{figure}
	\centering
	\graphicspath{{./figures/}}\includegraphics[width=0.8\linewidth]{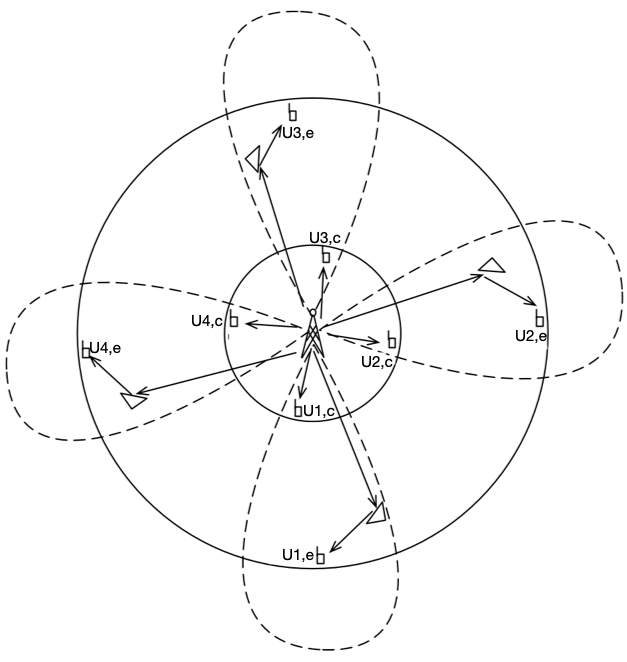}\\
	\caption{An IRS NOMA sytem model.} \label{Fig0}
\end{figure}
As shown in Fig.\ref{Fig0}, we consider a multi-user downlink network where two types of users, namely the central user and the cell edge user, are served by the BS simultaneously. Generally, the central users are much closer to the base station than the cell edge users. We assume that there are $K$ clusters and each cluster contains a central user, a cell edge user and an IRS. We use $CU_k$, $EU_k$ and ${\rm IRS}_k$ to represent the central user, the cell edge user and the IRS in the $k$-th cluster, respectively. Each IRS is equipped with $N$ passive reflecting elements and assists the cell edge user receiving signal from the BS. The BS is equipped with $M$ ($M \geq K$) antennas and $K$ beams to serve $K$ clusters. We assume that the direct links between the BS and all cell edge users are not available due to blockage, and the IRSs are implemented to reflect the signals sent by the BS to the cell edge users. Each cluster is far from others so the interference caused by the IRSs serving the other clusters can be reasonably ignored. In each cluster, the IRS is deployed closely to the cell edge user so that it will not infect the central user either. \par

To improve the spectrum efficiency, we adopt NOMA to serve all users simultaneously and we also assign different power levers to the two users in each cluster. The base station broadcasts the superposition signal $\sum_{k=1}^K \mathbf{w}_k(\alpha_k s_{k,c} + (1-\alpha_k)s_{k,e})$, where $\mathbf{w}_k \in \mathbb{C}^M$ denotes the beamforming vector in the $k$-th cluster and $k\in {1,2,...,K}$. $s_{k,c}$ and $s_{k,e}$ denote the signals to be sent to the central users and the cell edge users, respectively, and $\alpha_k$ is the power allocation coefficient of $CU_k$, thus $1-\alpha_k$ is the power coefficient of $EU_k$. We assume that the BS perfectly knows channel state information (CSI). Therefore, the signal received at $CU_k$ is given by 
\begin{align}
y_{k,c} = \mathbf{h}_{k,c}^H\sum_{k=1}^K \mathbf{w}_k(\alpha_k s_{k,c} + (1-\alpha_k)s_{k,e}) + w_{c,k},
\end{align}
where $\mathbf{h_{k,c} \in \mathbb{C}^{M\times 1}}$ denotes the channel vector between the base station and $CU_k$, and $w_{c,k} \sim \mathcal{CN}(0,\sigma^2)$ is the additive white Gaussian noise (AWGN). Meanwhile, the signal received at $CU_k$ is given by 
\begin{align}
y_{k,e} = (\mathbf{h}_{k,e}^H\mathbf{\Theta}_k\mathbf{G}_k)\sum_{k=1}^K \mathbf{w}_k(\alpha_k s_{k,c} + (1-\alpha_k)s_{k,e}) + w_{e,k},
\end{align}
where $\mathbf{G}_k \in \mathbb{C}^{N \times M}$ denotes the channel matrix between the BS and ${\rm IRS}_k$, $w_{e,k} \sim \mathcal{CN}(0,\sigma^2)$ denotes AWGN, $\mathbf{h}_{k,e} \in \mathbb{C}^{N\times 1}$ denotes the channel vector between ${\rm IRS}_k$ and $EU_k$, and $\mathbf{\Theta}_k = {\rm diag}(\beta e^{j\theta_1^k},...,\beta e^{j\theta_n^k})$ is the phase shift matrix of ${\rm IRS}_k$, where $\theta_n^k \in [0,2\pi), n \in \{1,...,N\}$ and $\beta \in [0,1]$ denote the phase shift of each reflecting element $n$ and amplitude coefficient on the signal, respectively. Without loss of generality, we assume $\beta=1$ given the fact that each reflecting element can only change the phase but not the amplitude of the reflected signal \cite{liu2020ris}. More detailed discussions about the choices of the reflecting amplitude and the phase shift can be found in \cite{abeywickrama2020intelligent}. Due to the path loss, we consider that the signal can be only efficiently reflected by the IRS once. Moreover, the long distance that geographically separates each cluster justifies the assumption that the IRS in one cluster will not infect other clusters. With this assumption, in each cluster, SIC is only performed at the central user to eliminate the interference from its intra-cluster edge user and the cell edge user decodes its data directly. Hence, the SINR of $EU_k$ is given by
\begin{align}
\mathrm{SINR}_{k,e}=\frac{|\mathbf{h}_{k,e}^H\mathbf{\Theta}_k\mathbf{G}_k\mathbf{w}_k|^2(1-\alpha_k)}{|\mathbf{h}_{k,e}^H\mathbf{\Theta}_k\mathbf{G}_k\mathbf{w}_k|^2\alpha_k+\sum\limits_{\substack{i=1 \\ i\neq k}}^K|\mathbf{h}_{k,e}^H\mathbf{\Theta}_k\mathbf{G}_k\mathbf{w}_i|^2+\sigma^2}, \label{eq3}
\end{align}
where $|\mathbf{h}_{k,e}^H\mathbf{\Theta}_k\mathbf{G}_k\mathbf{w}_k|^2\alpha_k$ is intra-cluster interference and $\sum_{\substack{i=1 \\ i\neq k}}^K|\mathbf{h}_{k,e}^H\mathbf{\Theta}_k\mathbf{G}_k\mathbf{w}_i|^2$ is inter-cluster interference. For the central users, they need to apply SIC to decode $s_{k,e}$ first and then remove it. Thus, the SINR of signal $s_{k,e}$ observed at $CU_k$ can be expressed as follows
\begin{align}
\mathrm{SINR}_{k,c\to e}=\frac{|\mathbf{h}_{k,c}^H\mathbf{w}_k|^2(1-\alpha_k)}{|\mathbf{h}_{k,c}^H\mathbf{w}_k|^2\alpha_k+\sum\limits_{\substack{i=1 \\ i\neq k}}^K|\mathbf{h}_{k,c}^H\mathbf{w}_i|^2+\sigma^2}.
\end{align}
The SNR of $CU_k$ to decode its own signal is given by
\begin{align}
\mathrm{SINR}_{k,c}=\frac{|\mathbf{h}_{k,c}^H\mathbf{w}_k|^2\alpha_k}{\sum\limits_{\substack{i=1 \\ i\neq k}}^K|\mathbf{h}_{k,c}^H\mathbf{w}_i|^2+\sigma^2}.
\end{align}

\subsection{Problem Formulation}
In this section, we formulate a transmit power minimization problem by jointly optimizing the beamforming vector $\left(\mathbf{w}_k, k \in \{1,...K\}\right)$, power allocation coefficients $\left(\alpha_k, k \in \{1,...K\}\right)$ and phase shifting matrix $\left(\mathbf{\Theta}_k, k \in \{1,...K\}\right)$, while considering the quality of service (QoS) requirement and the constraints of reflecting elements. The considered transmit power minimization problem can be formulated as

\begin{subequations}\label{Prob:0} 
\begin{align}
&{\rm P}0:\min_{\boldsymbol{\alpha},\mathbf{w},\mathbf{\Theta}}\quad\sum\limits_{k=1}^K||\mathbf{w}_k||^2 \\
		&~\mathrm{s.t.}~\log_2(1+\mathrm{SINR}_{k,c})\geq R_{k,c},\quad \forall k \label{P1b}\\
		&\qquad \log_2(1+ \min (\mathrm{SINR}_{k,e},\mathrm{SINR}_{k,c\to e}))\geq R_{k,e},\forall k \label{P1c}\\
		&\qquad 0\leq\theta_{i,n}\leq 2\pi,\quad \forall ~ i,n \label{P1d} \\
		&\qquad |\mathbf{\Theta}_{i,n,n}|\leq 1,\quad \forall ~ i,n \label{P1e}
\end{align}
\end{subequations}
where $||\mathbf{w}_k||^2$ is the transmit power allocated to the cluster $k$, $R_{k,c}$ and $R_{k,e}$ denote the required minimum data rate of $CU_k$ and $EU_k$, respectively. The constraints (\ref{P1b}) and (\ref{P1c}) indicate the QoS requirements of the central users and the cell edge users, (\ref{P1d}) defines the phase shift range of the reflecting elements and \eqref{P1e} ensures that the IRS is a passive component. \par
However, problem ${\rm P}0$ is highly intractable due to the non-convex constraints (\ref{P1b}) and (\ref{P1c}). The non-convexity is caused by three highly coupled variables (i.e. $\mathbf{w}$, $\alpha$ and $\mathbf{\Theta}$). To efficiently solve this problem, we adopt SCA, SDR and the inequality approximation to develop an alternating algorithm to iteratively solve it.
\section{ Optimization Solution}
As discussed in the previous section, it is difficult to find the optimal solution of P0 due to its non-convexity. In this section, an alternating optimization algorithm is proposed to solve P0 efficiently. The main idea of this algorithm is to divide the primal problem into two subproblems and solve them alternatively. In particular, P0 is divided to a beamforming optimization subproblem and a feasible phase shifting matrix search subproblem. As shown later, each of the two subproblems is non-convex, and we will transform them into convex forms which can be solved efficiently by convex solver, e.g., CVX in Matlab.

\subsection{Beamforming Optimization}
For a given phase shifting matrix $\mathbf{\Theta}$, the concatenated channel respond $\mathbf{h}_{k,e}^H\mathbf{\Theta}_k\mathbf{G}_k \in \mathbb{C}^{1 \times M}$ is fixed. Thus, the beamforming optimization problem can be formulated as
\begin{subequations}\label{Prob:1} 
\begin{align}
{\rm P}1:&\min\limits_{\mathbf{\alpha,\mathbf{w}}} \quad \sum\limits_{k=1}^K ||\mathbf{w}_k||^2 \\
		&{\rm s.t.} \quad\quad \log_2(1+{\rm SINR}_{k,c}) \geq R_{k,c}, \quad \forall k \label{P2a}\\
		& \qquad \;\quad \log_2(1+{\rm SINR}_{k,e}) \geq R_{k,e},\quad \forall k \label{P2b}\\
		& \qquad \;\quad \log_2(1+{\rm SINR}_{k,c \to e}) \geq R_{k,e},\quad \forall k \label{P2c} \\
		& \qquad \;\quad 0 \leq \alpha_k \leq 1, \quad \forall k. \label{P2d}
\end{align}
\end{subequations}
P1 is non-convex because the beamforming vector and the power allocation coefficient are still coupled together in all constraints except \eqref{P2d}, which is challenging to be solved. We notice that the rank-constrained semidefinite programming (SDP) problem can be approximated to a convex form. Therefore, we convert ${\rm P}1$ into a SDP form, then SDR can be applied to solve this problem. \par
First, we transform the constraint (\ref{P2b}) into a convex form. According to \eqref{eq3}, the constraint (\ref{P2b}) can be rewritten as follows:
\begin{align} 
\frac{|e_k^H \mathbf{D}_{k,e} \mathbf{G}_k \mathbf{w}_k|^2(1-\alpha_k)}{|e_k^H \mathbf{D}_{k,e} \mathbf{G}_k \mathbf{w}_k|^2\alpha+\sum\limits_{\substack{i=1 \\ i\neq k}}^K|e_k^H \mathbf{D}_{k,e} \mathbf{G}_k \mathbf{w}_i|^2 + \sigma^2}  \geq r_{k,e}, \label{eq8}
\end{align}
where $r_{k,e} = 2^{R_{k,e}} - 1$, $e_k$ is an $N \times 1$ vector containing all the diagonal elements of $\mathbf{\Theta}_k^H$, and $\mathbf{D}_{k,e}$ is a diagonal matrix, whose main diagonal elements are from the channel vector $\mathbf{h}_{k,e}^H$. After some algebraic transformations, (\ref{eq8}) can be equivalently expressed as follows:
\begin{equation} \label{eq9}
\begin{split}
&|e_k^H \mathbf{D}_{k,e} \mathbf{G}_k \mathbf{w}_k|^2(1+r_{k,e}) \alpha_k \leq \\
&|e_k^H \mathbf{D}_{k,e} \mathbf{G}_k \mathbf{w}_k|^2 - \sum\limits_{\substack{i=1 \\ i\neq k}}^K|e_k^H \mathbf{D}_{k,e} \mathbf{G}_k \mathbf{w}_i|^2r_{k,e} - \sigma^2r_{k,e}.
\end{split}
\end{equation}
Since the CSI is perfectly known by the BS, the channel $e_k^H \mathbf{D}_{k,e} \mathbf{G}_k$ is fixed with a given phase shifting matrix. For simply notation, we replace $e_k^H \mathbf{D}_{k,e} \mathbf{G}_k$ with $\mathbf{z}_{k,e}^H$ and rewrite (\ref{eq9}) as follows:
\begin{equation}
\alpha_k|\mathbf{z}_{k,e}^H \mathbf{w}_k|^2 \leq \frac{|\mathbf{z}_{k,e}^H \mathbf{w}_k|^2}{1+r_{k,e}} - (\sum\limits_{\substack{i=1 \\ i\neq k}}^K |\mathbf{z}_{k,e}^H \mathbf{w}_i|^2 + \sigma^2) \frac{r_{k,e}}{1+r_{k.e}}, \label{eq10}
\end{equation}
where $\mathbf{z}_{k,e}^H = e_k^H \mathbf{D}_{k,e} \mathbf{G}_k$. Note that the beamforming vector in (\ref{eq10}) has the same form as $\mathbf{w}_k \mathbf{w}_k^H$. Inspired by SDR, we introduce a slack matrix $\mathbf{W}_k = \mathbf{w}_k \mathbf{w}_k^H$, which is a rank-one positive semidefinite (PSD) matrix. Then the constraint (\ref{eq10}) can be equivalently rewritten as follows:
\begin{align}
&\alpha_k \rm{Tr}(\mathbf{Z}_{k,e} \mathbf{W}_k) \leq \notag \\
&\frac{\rm{Tr}(\mathbf{Z}_{k,e}\mathbf{W}_k)}{1+r_{k,e}} - (\sum\limits_{\substack{i=1 \\ i\neq k}}^K {\rm Tr}(\mathbf{Z}_{k,e}\mathbf{W}_i) + \sigma^2) \frac{r_{k,e}}{1+r_{k.e}} \label{eq11} \\
&\mathbf{W}_k \succcurlyeq 0 \label{eq12} \\
&\rm{Rank}(\mathbf{W}_k) = 1, \label{eq13}
\end{align}
where $\mathbf{Z}_{k,e} = \mathbf{z}_{k,e} \mathbf{z}_{k,e}^H$. From (\ref{eq11}), we notice that the right hand side of (\ref{eq11}) is a liner combination of two convex terms with respect to $\mathbf{W}_k$, which is convex. The only obstacle is the left hand side, which is a bilinear term constructed by  $\alpha_k$ and $\rm{Tr}(\mathbf{Z}_{k,e} \mathbf{W}_k)$. To make this constraint a convex set, we need to transform the non-convexity function $\alpha_k \rm{Tr}(\mathbf{Z}_{k,e} \mathbf{W}_k)$ to a convex form. Inspired by the inequality of arithmetic and geometric means, the non-convex feasible set of the left hand side term can be upper bounded by a convex set $\frac{1}{2} (\alpha_k^2 + \rm{Tr}(\mathbf{Z}_{k,e}\mathbf{W}_k)^2)$. To tighten this upper bound in each iteration of the proposed iterative algorithm, we introduce a fixed point $c_k$, then we have 
\begin{align}
2\alpha_k {\rm Tr}(\mathbf{Z}_{k,e} \mathbf{W}_k) \leq (\alpha_k c_k)^2 + \left(\frac{({\rm Tr}(\mathbf{Z}_{k,e} \mathbf{W}_k)}{c_k}\right)^2. \label{eq14}
\end{align}
We iteratively update this fixed feasible point.
\begin{lem} \label{lemma1}
The fixed point $c_k$ at the $m$-th iteration can be updated by:
\begin{align}
c_k^{(m)} = \sqrt{\frac{{\rm Tr}(\mathbf{Z}_{k,e}\mathbf{W}_k^{(m-1)})}{\alpha_k^{(m-1)}}} \label{eq15}
\end{align}
\end{lem}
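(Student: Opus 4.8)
The plan is to read off the update \eqref{eq15} directly from the equality condition of the arithmetic--geometric mean inequality that produced \eqref{eq14}, so that the convex restriction of \eqref{eq11} is as tight as possible at the current iterate. First I would recall that for any reals $a,b$ one has $2ab\le a^2+b^2$, with equality exactly when $a=b$. Identifying $a=\alpha_k c_k$ and $b={\rm Tr}(\mathbf{Z}_{k,e}\mathbf{W}_k)/c_k$ reproduces \eqref{eq14} verbatim and shows the bound is valid for every $c_k>0$; no optimization over $c_k$ is actually required, only the equality case has to be located.

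Next I would observe that replacing the bilinear term $2\alpha_k{\rm Tr}(\mathbf{Z}_{k,e}\mathbf{W}_k)$ in \eqref{eq11} by its upper bound shrinks the feasible set, and the two sets touch precisely where \eqref{eq14} holds with equality, i.e. where $\alpha_k c_k={\rm Tr}(\mathbf{Z}_{k,e}\mathbf{W}_k)/c_k$, equivalently
\[
c_k=\sqrt{\frac{{\rm Tr}(\mathbf{Z}_{k,e}\mathbf{W}_k)}{\alpha_k}}.
\]
Within the successive convex approximation scheme this tightness must be imposed at the point returned by the preceding round, namely $(\mathbf{W}_k^{(m-1)},\alpha_k^{(m-1)})$; substituting these values gives exactly \eqref{eq15}. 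I would then note that this is what keeps the iterate of round $m-1$ feasible for the restricted problem of round $m$, hence what makes the alternating algorithm well defined and the objective monotonically non-increasing (to be used later in the convergence analysis).

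The only delicate point, and what I would flag as the main obstacle, is well-definedness of the square root: the numerator must be nonnegative and $\alpha_k^{(m-1)}$ strictly positive. Nonnegativity of ${\rm Tr}(\mathbf{Z}_{k,e}\mathbf{W}_k^{(m-1)})$ follows from $\mathbf{Z}_{k,e}=\mathbf{z}_{k,e}\mathbf{z}_{k,e}^H\succcurlyeq 0$ together with $\mathbf{W}_k^{(m-1)}\succcurlyeq 0$ by \eqref{eq12}; strict positivity of $\alpha_k^{(m-1)}$ and of the effective channel gain can be argued from the QoS constraints \eqref{P2a}--\eqref{P2c}, since a vanishing power-allocation coefficient or a zero $\mathbf{z}_{k,e}^H\mathbf{w}_k$ would force ${\rm SINR}_{k,c\to e}=0$ or ${\rm SINR}_{k,e}=0$, contradicting $R_{k,e}>0$. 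With these sign conditions in place, \eqref{eq15} is the unique positive root of the equality condition, which completes the argument.
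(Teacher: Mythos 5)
Your proof is correct and follows essentially the same route as the paper: both identify the update \eqref{eq15} as the value of $c_k$ that makes the surrogate bound \eqref{eq14} tight at the previous iterate $(\mathbf{W}_k^{(m-1)},\alpha_k^{(m-1)})$. The only difference is cosmetic --- the paper locates this point by maximizing the concave gap function $\mathcal{F}(c_k)$ through its (first) derivative, whereas you read it off directly from the equality case of $2ab\le a^2+b^2$; your added remarks on well-definedness of the square root (nonnegativity of ${\rm Tr}(\mathbf{Z}_{k,e}\mathbf{W}_k^{(m-1)})$ and strict positivity of $\alpha_k^{(m-1)}$) cover a detail the paper leaves implicit.
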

\begin{proof}
We define the difference function of the original function $2\alpha_k {\rm Tr}(\mathbf{Z}_{k,e} \mathbf{W}_k)$  and its approximated upper bound as
\begin{align}
\mathcal{F}(c_k) = 2\alpha_k {\rm Tr}(\mathbf{Z}_{k,e} \mathbf{W}_k) - (\alpha_k c_k)^2 - \left(\frac{({\rm Tr}(\mathbf{Z}_{k,e} \mathbf{W}_k)}{c_k}\right)^2 \label{eq16}
\end{align}
when the function \eqref{eq16} equals to 0, both sides of \eqref{eq14} are equal, which tightens the upper bound. From \eqref{eq14}, we notice that the maximum value of function $\mathcal{F}(c_k)$ is 0. Since 
\begin{align}
\frac{\partial^2\mathcal{F}(c_k)}{\partial c_k^2} = -2 \alpha_k - \frac{6 {\rm Tr}(\mathbf{Z}_{k,e} \mathbf{W}_k)}{c_k^4} \leq 0,
\end{align}
when $\alpha_k \geq 0$ and ${\rm Tr}(\mathbf{Z}_{k,e} \mathbf{W}_k) \geq 0$, the function $\mathcal{F}(c_k)$ is a concave function with respect to $c_k$. The optimal value of $c_k$, defined as $c_k^*$, can be obtained by $\frac{\partial^2\mathcal{F}(c_k)}{\partial c_k^2} = 0$, then we have
\begin{align}
c_k^{*} = \sqrt{\frac{{\rm Tr}(\mathbf{Z}_{k,e}\mathbf{W}_k)}{\alpha_k}}. \label{up_c}
\end{align} 
\end{proof}
Hence, the constraint \eqref{P2b} can be approximated as follows:
\begin{equation} \label{eq19}
\begin{split}
&(\alpha_k c_k)^2 + \left(\frac{({\rm Tr}(\mathbf{Z}_{k,e} \mathbf{W}_k)}{c_k}\right)^2 \leq \\ 
&2\frac{{\rm Tr}(\mathbf{Z}_{k,e}\mathbf{W}_k)}{1+r_{k,e}} - 2(\sum\limits_{\substack{i=1 \\ i\neq k}}^K {\rm Tr}(\mathbf{Z}_{k,e}\mathbf{W}_i) + \sigma^2) \frac{r_{k,e}}{1+r_{k.e}}. \\
\end{split}
\end{equation}
It is noted that the left hand side of \eqref{eq19} is convex and the right side of \eqref{eq19} is an affine function, which means that the constraint \eqref{eq19} is a convex set.  \par
For handling with the next non-convex constraint \eqref{P2c}, after some algebraic manipulations, we can rewrite \eqref{P2c} as follows:
\begin{align}
\alpha_k|\mathbf{h}_{k,c}^H \mathbf{w}_k|^2 \leq \frac{|\mathbf{h}_{k,c}^H \mathbf{w}_k|^2}{1+r_{k,e}} - (\sum\limits_{\substack{i=1 \\ i\neq k}}^K |\mathbf{h}_{k,c}^H \mathbf{w}_i|^2 + \sigma^2) \frac{r_{k,e}}{1+r_{k.e}}. \label{eq20}
\end{align}
It is worth to point out that \eqref{eq20} has the same form as \eqref{eq10}. Similarly, the method allied to \eqref{eq10} can be efficiently applied to \eqref{eq20} to yield a convex form. Therefore, \eqref{eq20} can be eventually transformed to
\begin{equation} \label{eq21}
\begin{split}
&(\alpha_k d_k)^2 + \left(\frac{({\rm Tr}(\mathbf{H}_{k,c} \mathbf{W}_k)}{d_k}\right)^2 \leq \\ 
&2\frac{{\rm Tr}(\mathbf{H}_{k,c}\mathbf{W}_k)}{1+r_{k,e}} - 2(\sum\limits_{\substack{i=1 \\ i\neq k}}^K {\rm Tr}(\mathbf{H}_{k,c}\mathbf{W}_i) + \sigma^2) \frac{r_{k,e}}{1+r_{k.e}} \\
\end{split}
\end{equation}
where $\mathbf{H}_{k,c} = \mathbf{h}_{k,c} \mathbf{h}_{k,c}^H$, and $d_k$ is a fixed point. At the $m$-th iteration, $d_k$ can be updated as follows:
\begin{align}
d_k^{(m)} = \sqrt{\frac{{\rm Tr}(\mathbf{H}_{k,c}\mathbf{W}_k^{(m-1)})}{\alpha_k^{(m-1)}}}. \label{up_d}
\end{align} \par
Now, we focus on the last non-convex constraint \eqref{P2a}. First, we also rewrite it as follows:
\begin{align}
\alpha_k {\rm Tr}(\mathbf{H}_{k,c} \mathbf{W}_k) \geq \sum\limits_{\substack{i=1 \\ i \neq k}}^K {\rm Tr}(\mathbf{H}_{k,c} \mathbf{W}_i) r_{k,c} + \sigma^2 r_{k,c} \label{eq23}
\end{align}
where $r_{k,c} = 2^{R_{k,c}} - 1$. Though \eqref{eq23} also has a bilinear term $\alpha_k {\rm Tr}(\mathbf{H}_{k,c} \mathbf{W}_k)$, we cannot straightforwardly apply the method which has been successfully applied to constraint \eqref{P2b} and \eqref{P2c}. Even we replace $\alpha_k {\rm Tr}(\mathbf{H}_{k,c} \mathbf{W}_k)$ with the sum of two square terms through the inequality of arithmetic and geometric means they are located at the left side of $\geq$ sign, which causes this inequality to be concave, and the transformed constraint is still non-convex. Hence, we propose another method to deal with this constraint. First, we introduce a slack variable $t_k$ and \eqref{P2a} can be equivalently transformed to 
\begin{align}
\alpha_k {\rm Tr}(\mathbf{H}_{k,c} \mathbf{W}_k) \geq t_k^2 \label{eq24}
\end{align}
\begin{align}
t_k^2 \geq \sum\limits_{\substack{i=1 \\ i \neq k}}^K {\rm Tr}(\mathbf{H}_{k,c} \mathbf{W}_i)r_{k,c} + \sigma^2r_{k,c}. \label{eq25}
\end{align}
It can be straightforwardly show that neither of \eqref{eq24} and \eqref{eq25} is convex. According to the convex optimization theory \cite{boyd2004convex}, we know that the sufficient and necessary condition for a matrix to be PSD is that its Schur complement is greater than zero and also know that a PSD matrix is a convex constraint. After a simple transformation, \eqref{eq24} can be rewritten as follows:
\begin{align}
\alpha_k - \frac{t_k^2}{{\rm Tr}(\mathbf{H}_{k,c} \mathbf{W}_k)} \geq 0, \label{eq26}
\end{align}
which is equivalent to
\begin{align}
\begin{bmatrix}
		\alpha_k & t_k \\
		t_k & {\rm Tr}(\mathbf{H}_{k,c} \mathbf{W}_i)
		\end{bmatrix} \succcurlyeq 0. \label{eq27}
\end{align}
Constraints \eqref{eq26} and \eqref{eq27} are mutually sufficient, and constraint \eqref{eq27} is convex. Now, we deal with constraint \eqref{eq25}. We notice that $t_k^2$ is on the left hand side of the greater sign, which makes the whole constraint a non-convex set. To deal with this, we adopt SCA, where the first order Taylor series approximation is adopted to approximate the quadratic form \eqref{eq25} to
\begin{align}
t_{k,0}^2 + 2t_{k,0}(t_k - t_{k,0}) \geq \sum\limits_{\substack{i=1 \\ i \neq k}}^K {\rm Tr}(\mathbf{H}_{k,c} \mathbf{W}_i)r_{k,c} + \sigma^2r_{k,c}
\end{align}
where $t_{k,0}$ is a fixed point. By applying SCA, we update $t_{k,0}$ at the $m$-th iteration by
\begin{align}
t_{k,0}^{(m)} = t_k^{(m-1)}. \label{up_t}
\end{align}
The final obstacle to deal with this problem arises from the rank-one constraint \eqref{eq13}. By applying SDR, the rank-one constraint is omitted to make the whole problem tractable. Thus we eventually transform P1 to
\begin{subequations}\label{Prob:2} 
\begin{align}
&{\rm P2}:\min\limits_{\mathbf{\alpha,\mathbf{w}, \mathbf{t}}} \quad \sum\limits_{k=1}^K {\rm Tr}(\mathbf{W}_k) \\
		&{\rm s.t.} \quad (\alpha_k c_k)^2 + \left(\frac{({\rm Tr}(\mathbf{Z}_{k,e} \mathbf{W}_k)}{c_k}\right)^2 \leq \notag\\
		 &\quad  2\frac{{\rm Tr}(\mathbf{Z}_{k,e}\mathbf{W}_k)}{1+r_{k,e}} - 2(\sum\limits_{\substack{i=1 \\ i\neq k}}^K {\rm Tr}(\mathbf{Z}_{k,e}\mathbf{W}_i) + \sigma^2) \frac{r_{k,e}}{1+r_{k.e}}, \forall k\label{P3a}\\
		& \qquad  (\alpha_k d_k)^2 + \left(\frac{({\rm Tr}(\mathbf{H}_{k,c} \mathbf{W}_k)}{d_k}\right)^2 \leq \notag \\
		&\quad 2\frac{{\rm Tr}(\mathbf{H}_{k,c}\mathbf{W}_k)}{1+r_{k,e}} - 2(\sum\limits_{\substack{i=1 \\ i\neq k}}^K {\rm Tr}(\mathbf{H}_{k,c}\mathbf{W}_i) + \sigma^2) \frac{r_{k,e}}{1+r_{k.e}}, \forall k\label{P3b}\\
		& \quad  \begin{bmatrix}
		\alpha_k & t_k \\
		t_k & {\rm Tr}(\mathbf{H}_{k,c} \mathbf{W}_i)
		\end{bmatrix} \succcurlyeq 0, \forall k\label{P3c}\\
		& \quad  t_{k,0}^2 + 2t_{k,0}(t_k - t_{k,0}) \geq \sum\limits_{\substack{i=1 \\ i \neq k}}^K {\rm Tr}(\mathbf{H}_{k,c} \mathbf{W}_i)r_{k,c} + \sigma^2r_{k,c},\forall k \label{P3d} \\
		& \quad 0 \leq \alpha_k \leq 1, \forall k. \label{P3e}
\end{align}
\end{subequations}
\begin{algorithm}[tp]
	\caption{Initial Point Search Algorithm }\label{Alg1}
	\begin{algorithmic}[1] 
		\STATE {{\bf Initialize:} $c_k^{(0)}$, $d_k^{(0)}$, $t_{k,0}^{(0)} \quad \forall k$, $\epsilon = 0.00001$, $i=0$, $q^{(0)} = 100$.}
		\WHILE {$q^{(i)} > \epsilon$}
		\STATE {$i = i+1.$}
		\STATE {Update $\mathbf{W}_k^{(i)}$, $\alpha_k^{(i)}$ and $q^{(i)}$ with fixed $c_k^{(i-1)}$, $d_k^{(i-1)}$, $t_{k,0}^{(i-1)}$ by solving P3.}
		\STATE {Update $c_k^{(i)}$, $d_k^{(i)}$, and $t_{k,0}^{(i)}$ based on \eqref{up_c}, \eqref{up_d} and \eqref{up_t} respectively.}
		\ENDWHILE
		\STATE {{\bf Output} $c_k^{(i)}$, $d_k^{(i)}$, and $t_{k,0}^{(i)}.$}
	\end{algorithmic}\label{Al1}
\end{algorithm}

Since the restriction of rank one is removed, P2 is a convex problem and can be efficiently solved by convex optimization toolboxes, for instance, CVX. However, the optimal solution of P2 may not be the optimal solution of P1 unless the rank of $\mathbf{W}_k^*, \forall k$ is 1. We use Gaussian randomization \cite{luo2010semidefinite} to alternatively obtain a suboptimal solution. We define the optimal solution of P2 as $\mathbf{W}_k^*, \forall k$, and each $\mathbf{W}_k^*$ is a positive semidefinite matrix. Recall that $\mathbf{W}_k^* = \mathbf{w}_k^* \mathbf{w}_k^{*H}$, eigenvalue decomposition can be used to obtain the optimal beamforming vector.  \par

Before we solve P2, we need to initialize three fixed points, $c_k$, $d_k$ and $t_{k,0}, \forall k$. It is noted that initializing them randomly will make the formulated problem infeasible. Hence, we propose a feasible initial points search algorithm to find the feasible fixed points to make P2 solvable. From the P2, we notice that the fixed points $c_k$. $d_k$ and $t_{k,0}$ must satisfy the constraints \eqref{P3a}, \eqref{P3b} and \eqref{P3c}. To address this problem, we introduce an auxiliary variable $q$, which intentionally relaxes the constraints to enlarge the feasible set. Then, we can formulate the initial point search problem as
\begin{subequations}\label{Prob:3} 
\begin{align}
&{\rm P}3:\min\limits_{\mathbf{\alpha,\mathbf{w}},\mathbf{t},q} \quad q \\
		&{\rm s.t.} \quad \eqref{P3a} \; \eqref{P3b} \; \eqref{P3c} \; \eqref{P3d} \; \eqref{P3e}  \\
		& \quad \; \quad q \geq 0. \label{P4f}
\end{align}
\end{subequations}
Specifically, when $q$ equals to 0, all constraints in P3 are exactly the same as the constraints in P2 and the obtained values of $c_k$, $d_k$ and $t_{k,0}$ can be the initial points of P2, which will guarantee the feasibility. We notice that the objective function is an affine function and all constraints are convex so it can be solved easily by CVX. To solve P3 efficiently, we design an iterative algorithm shown as Algorithm \ref{Alg1} to solve it iteratively. It is worth to point out that, unlike P2, the initial points $c_k^{(0)}$, $d_k^{(0)}$ and $t_{k,0}^{(0)}$ in P3 can be generated randomly because the feasibility of P3 can be always guaranteed. \par
\begin{algorithm}[tp]
	\caption{The Beamforming Optimization Algorithm }\label{Alg2}
	\begin{algorithmic}[1] 
		\STATE {{\bf Initialize:} fixed feasible points \{$c_k^{*(0)}$, $d_k^{*(0)}$, $t_{k,0}^{*(0)}, $\} $\forall k $, $\epsilon = 0.001$, $m=0$.}
		\WHILE {$\sum\limits_{k=1}^K {\rm Tr}(\mathbf{W}_k^{(m)})-\sum\limits_{k=1}^K {\rm Tr}(\mathbf{W}_k^{(m+1)}) \geq \epsilon$}
		\STATE {Update beamforming matrix \{$\mathbf{W}_k^{(m)}, \alpha_k^{(m)}$\}, $\forall k $ by solving P2 with the fixed feasible point\{$c_k^{*(m)}$, $d_k^{*(m)}$, $t_{k,0}^{*(m)} $\}, $\forall k$}.
		\STATE {Update \{$c_k^{*(m)}$, $d_k^{*(m)}$, $t_{k,0}^{*(m)}$\}, $\forall k $ based on \eqref{eq15}, \eqref{up_d} and \eqref{up_t} respectively.}
		\STATE {$m=m+1$.}
		\ENDWHILE
		\STATE {Update $\alpha_k^* = \alpha_k^{(m)}, \forall k $}
		\STATE {Update beamforming vector $\mathbf{w}_k^{*}, \forall k$ by decomposing $\mathbf{W}_k^{(m)}, \forall k$} based on Gaussian Randomization method.
		\STATE {{\bf Output} \{$\mathbf{w}_k^{*}, \alpha_k^*$\}, $\forall k$}
	\end{algorithmic}\label{Al1}
\end{algorithm}

\begin{algorithm}[tp]
	\caption{The Proposed Alternating Algorithm}\label{Alg3}
	\begin{algorithmic}[1] 
		\STATE {{\bf Initialize:} $\epsilon = 0.001, j=0$.}
		\WHILE { $\sum\limits_{k=1}^K {\rm Tr}(\mathbf{W}_k^{*(j)}) - \sum\limits_{k=1}^K {\rm Tr}(\mathbf{W}_k^{*(j)}) \geq \epsilon$ }
		\STATE {Searching initial fixed feasible point \{$c_k^{*(j)}$, $d_k^{*(j)}$, $t_{k,0}^{*(j)}$\}, $\forall k $} based on Algorithm 1.
		\STATE {Update \{$\mathbf{W}_k^{*(j)}$,$\mathbf{w}_k^{*(j)}$, $\alpha_k^{*(j)}$\}, $\forall k$ based on Algorithm 2.}
		\STATE {Update $\mathbf{V}_k^{*(j)}, \forall k$ by solving P5 with \{$\mathbf{w}_k^{*(j)}$, $\alpha_k^{*(j)}$\}, $\forall k$}
		\STATE {Update phase shift vector $\mathbf{e}_k^{*(j)}, \forall k$ by decomposing $\mathbf{V}_k^{*(j)}, \forall k$ based on Gaussian Randomization method.}
		\STATE {$j = j + 1$}
		\ENDWHILE
		\STATE {{\bf Output} \{$\mathbf{w}_k^{*(j)}$, $\alpha_k^{*(j)}$, $\mathbf{e}_k^{*(j)}$\}, $\forall k$.}
	\end{algorithmic}\label{Al1}
\end{algorithm}

After deciding the fixed points, the last challenge for solving the beamforming optimization problem has been removed. To solve this problem efficiently, we design an alternating algorithm to solve P2 iteratively. The details of the algorithm are shown in Algorithm 2. Specifically, the fixed initial points \{$c_k^{*(0)}$, $d_k^{*(0)}$, $t_{k,0}^{*(0)}$\} $\forall k$ are obtained from Algorithm 1.
\subsection{Phase Shifting Optimization}
In this section, we focus on the phase shifting optimization. The phase shifting optimization can be transformed to a feasibility problem since the objective function in the primal problem does not contain the phase shifting parameter $\mathbf{\Theta}_k, \forall k$. Only the constraints \eqref{P1b}, \eqref{P1c} and \eqref{P1d} in the primal problem contain the phase shifting parameter and \eqref{P1b} can be equivalently divided into \eqref{P2b} and \eqref{P2c}, where only \eqref{P2b} contains the phase shifting parameter. Therefore, given the beamforming vectors, the phase shift feasibility problem can be written as follows:
\begin{subequations}\label{Prob:4} 
\begin{align}
&{\rm P}4:{\rm find} \quad \mathbf{\Theta} \\
		&{\rm s.t.} \quad \log_2(1+{\rm SINR}_{k,e}) \geq R_{k,e}, \forall k \label{P5a}\\
		& \qquad \;\; 0 \leq \theta_{i,n} \leq 2\pi,  \forall i,n \label{P5b}\\
		& \qquad \; |\mathbf{\Theta}_{i,n,n}| = 1, \forall  i,n . \label{P5c}
\end{align}
\end{subequations}
Note that P4 is non-convex due to constraint \eqref{P5a}. It is straighforward to find out that the non-convexity arises form constraint \eqref{P5a}. We need to transform this non-convex constraint to be a convex one. Thus, we can rewrite \eqref{P5a} as follows:
\begin{align}
&|\mathbf{h}_{k,e}^H \Gamma_{\mathbf{p}_k}\mathbf{e}_k|^2(1+r_{k,e})\alpha_k \leq \notag \\
&|\mathbf{h}_{k,e}^H \Gamma_{\mathbf{p}_k}\mathbf{e}_k|^2 - \sum\limits_{\substack{i=1 \\ i \neq k}}^K |\mathbf{h}_{k,e}^H \Gamma_{\mathbf{p}_i}\mathbf{e}_k|^2 - \sigma^2 r_{k,e}, \label{theta1}
\end{align}
where $\Gamma_{\mathbf{p}_i}$ is a diagonal matrix whose main diagonal elements are from $\mathbf{p}_i = \mathbf{G}_k \mathbf{w}_i$ and $e_k$ is the phase shifting vector. However, with ${\mathbf{W}_k}, \alpha_k, \forall k$ already obtained from the beamforming optimization problem, constraint \eqref{theta1} is a quartic form with respect to $\mathbf{e}_k$. For simplicity, we substitute $\mathbf{h}_{k,e}^H \Gamma_{\mathbf{p}_i}$ with $\mathbf{r}_{k,e}^{iH}$. From \cite{luo2010semidefinite}, we know that a quartic form can be equivalently transformed to a linear form with a rank-one constraint. Thus, \eqref{theta1} can be expressed as follows:
\begin{align}
&{\rm Tr}(\mathbf{R}_{k,e}^k \mathbf{V}_k)(1+r_{k,e})\alpha_k \leq \notag \\
&{\rm Tr}(\mathbf{R}_{k,e}^k \mathbf{V}_k)- \sum\limits_{\substack{i=1 \\ i \neq k}}^K {\rm Tr}(\mathbf{R}_{k,e}^i \mathbf{V}_i)- \sigma^2 r_{k,e} \label{theta2}\\
&\mathbf{V}_k \succcurlyeq 0\label{theta3}\\
&{\rm Rank}(\mathbf{V}_k) = 1, \label{thetha4}
\end{align}
where $\mathbf{R}_{k,e}^i = \mathbf{r}_{k,e}^i \mathbf{r}_{k,e}^{iH}$ and $\mathbf{V}_i = \mathbf{e}_i \mathbf{e}_i^H$. Given $\mathbf{w}_k, \alpha_k, \forall k$, \eqref{theta2} is an affine constraint. The rank-one constraint will make the whole problem intractable, so we adopt SDR to remove this rank-one constraint frist. Then P4 can be transformed as follows:
\begin{subequations}\label{Prob:5} 
\begin{align}
&{\rm P}5:{\rm find} \quad \mathbf{V}_k, \quad \forall k \\
		&{\rm s.t.} \qquad \; \eqref{theta2}, \quad \forall k  \label{P6a} \\
		& \qquad \; \mathbf{V}_k \succcurlyeq 0, \quad \forall k \label{P6b}\\
		& \qquad \; \mathbf{V}_{k,n,n} = 1, \quad \forall k,n. \label{P6c}
\end{align}
\end{subequations}
P5 is a convex problem, which can be solved by CVX efficiently. Since the rank-one constraint is removed, the optimal solution of P5 may not be the optimal solution of P4. Therefore, Gaussian randomization will be applied to achieve a sub-optimal solution for P4.
\subsection{Algorithm Design}
The details of the proposed alternating algorithm are illustrated in Algorithm 3, where P2 and P5 are alternately solved until the convergence metric is satisfied. At the $i$-th iteration of Algorithm 3, first, the initial points are obtained by Algorithm 1. Then, the algorithm begins to solve the beamforming optimization problem by solving P2 through Algorithm 2. Then, the algorithm starts to solve phase shifting feasibility problem by solving P5 (step 5 and step 6) to obtain a feasible phase shift vector $\mathbf{e}_k^{*(i)}, \forall k$. The feasible phase shifting vector of this current iteration will be used as a given phase shift for the beamforming optimization in the next iteration. It is worth to point out that after each iteration, the channel state will change with the new obtained phase shifting vector $\mathbf{e}_k, \forall k$, so in each iteration before solving P2, we need to search new feasible fixed points (step 3), which is necessarily to guarantee that P2 is feasible at each iteration.
\subsection{Complexity analysis}
From \cite{luo2010semidefinite}, we learn that the worst complexity of solving a SDR problem through CVX is 
\begin{equation} \label{cvxcom}
\mathcal{O}({\rm max} \{m,n\}^4 n^{1/2} \log(1/\epsilon_{c})),
\end{equation}
where $n$ is the problem size, and $m$ is the number of constraints and $\epsilon_{c}$ is the accuracy of the algorithm that CVX adopts. We assume that the problem size is greater than the number of constraints, then we have the complexity of CVX to solve a SDR problem as 
\begin{equation} \label{cvxcom2}
\mathcal{O}(n^{4.5} \log(1/\epsilon_{c})).
\end{equation}
Algorithm 1 is essentially to solve a SDR problem multiple times until the accuracy is satisfied. Thus, the complexity of Algorithm 1 is
\begin{equation}
\mathcal{O}\left(n_1^{4.5} \log\left(\frac{1}{\epsilon_{c}}\right) \log\left( \frac{1}{\epsilon_{1}}\right)\right),
\end{equation}
where $n_1$ is the problem size of P3 and $\epsilon_1$ is the accuracy of Algorithm 1. Algorithm 2 is similar to Algorithm 1, which is also to solve a SDR problem multiple times and hence P2 has the same size as P3. Thus, the complexity of Algorithm 2 can be expressed as follows:
\begin{equation}
\mathcal{O}\left(n_1^{4.5} \log\left(\frac{1}{\epsilon_{c}}\right) \log\left(\frac{1}{\epsilon_{2}}\right)\right),
\end{equation}
where $\epsilon_2$ is the accuracy of Algorithm 2. Now, we have the complexities of step 3 and step 4 in Algorithm 3. The last one we need to consider is the complexity of step 5. It is easy to find out that a single SDR  problem is solved in the step 5, so the complexity is
\begin{equation}
\mathcal{O}(n_2^{4.5} \log(1/\epsilon_{c}),
\end{equation}
where $n_2$ is the problem size of P5. Finally, we have the complexity of the proposed algorithm as follows:
\begin{equation} \label{complprop}
\mathcal{O} (\mathcal{O}_1\log(1/\epsilon_3)),
\end{equation}
where 
\begin{align}
\mathcal{O}_1 = \; &n_1^{4.5} \left(\log\left(\frac{1}{\epsilon_{c}}\right) \log\left(\frac{1}{\epsilon_{1}}\right) + \log\left(\frac{1}{\epsilon_{c}}\right) \log\left(\frac{1}{\epsilon_{2}}\right) \right) + \notag \\
&n_2^{4.5} \log(1/\epsilon_{c})\notag.
\end{align}
\section{Partial Exhaustive Search Algorithm}
In this section, we propose a simple algorithm based on partial exhaustive search, which can reduce significantly computation complexity. The main idea of this partial exhaustive search algorithm is to assume that all the clusters using the same power allocation coefficient, of which the optimal value can be obtained by an exhaustive search within the range [0, 1]. The primal problem can also be divided into the beamforming optimization problem and the phase shifting feasibility problem.

Since each cluster shares the same power coefficient, the power coefficient is first fixed in each searching progress so we only need to optimize the beamforming vector and the phase shifting vector in these two subproblems. We notice that these two subproblems can be reduced to the QCQP problem, which is a classic form in convex optimization theory. SDR is widely used as one of the most common methods to efficiently solve the QCQP problem. Two subprobelms are formulated as P6 and P7. We can obtain P6 and P7 through the basic SDR theory and some simple algebraic transformations, where the derivation is omitted in this paper due to space limitations. 
\begin{subequations}\label{Prob:6}
\begin{align}
&{\rm P6}:\quad \min\limits_{\mathbf{\mathbf{w}}} \quad \sum\limits_{k=1}^K {\rm Tr}(\mathbf{W}_k) \\
&{\rm s.t.} \; \alpha {\rm Tr}(\mathbf{H}_{k,c} \mathbf{W}_k) \geq \sum\limits_{\substack{i=1 \\ i \neq k}}^K {\rm Tr}(\mathbf{H}_{k,c} \mathbf{W}_i) r_{k,c} + \sigma^2 r_{k,c}, \forall k \label{P7a} \\
& \alpha {\rm Tr}(\mathbf{Z}_{k,e} \mathbf{W}_k) \leq \notag \\
& \frac{{\rm Tr}(\mathbf{Z}_{k,e}\mathbf{W}_k)}{1+r_{k,e}} - (\sum\limits_{\substack{i=1 \\ i\neq k}}^K {\rm Tr}(\mathbf{Z}_{k,e}\mathbf{W}_i) + \sigma^2) \frac{r_{k,e}}{1+r_{k.e}} , \forall k \label{P7b} \\
& \alpha {\rm Tr}(\mathbf{H}_{k,c} \mathbf{W}_k) \leq \notag \\
& \frac{{\rm Tr}(\mathbf{Z}_{k,e}\mathbf{W}_k)}{1+r_{k,e}} - (\sum\limits_{\substack{i=1 \\ i\neq k}}^K {\rm Tr}(\mathbf{H}_{k,c}\mathbf{W}_i) + \sigma^2) \frac{r_{k,e}}{1+r_{k.e}}, \forall k \label{P7c} \\
& \qquad \qquad \qquad \mathbf{W}_k \succcurlyeq 0, \forall k \label{P7d} 
\end{align}
\end{subequations}

where $\mathbf{Z}_{k,e}, \mathbf{W}_k, \forall k$ in P6 are the same as those in P2.
\begin{subequations}\label{Prob:7}
\begin{align}
&{\rm P7}:{\rm find} \quad \mathbf{V}_k, \quad \forall k \\
&{\rm s.t.} \quad {\rm Tr}(\mathbf{R}_{k,e}^k \mathbf{V}_k)(1+r_{k,e})\alpha \leq \notag \\
&\qquad \; {\rm Tr}(\mathbf{R}_{k,e}^k \mathbf{V}_k)- \sum\limits_{\substack{i=1 \\ i \neq k}}^K {\rm Tr}(\mathbf{R}_{k,e}^i \mathbf{V}_i)- \sigma^2 r_{k,e}, \forall k \label{P8a}\\
&\qquad \; \mathbf{V}_k \succcurlyeq 0, \quad \forall k \label{P8b}\\
& \qquad \; \mathbf{V}_{k,n,n} = 1, \quad \forall k,n \label{P8c}
\end{align}
\end{subequations}
where $\mathbf{R}_{k,e}^i, \forall i,k$ and $\mathbf{V}_k, \forall k$ are the same as those in P5. The detail of the partial exhaustive search algorithm is illustrated in Algorithm \ref{Alg4}. \par
In each search progress, the algorithm will solve two SDR problems with different sizes $n_1$ and $n_2$, which are the problem sizes of P6 and P7. Therefore, the complexity of Algorithm \ref{Alg4} can be expressed as follows:
\begin{equation}\label{complxqiong}
\mathcal{O}\left(I \left( n_1^{4.5} \log(1/\epsilon_{c})  + n_2^{4.5} \log(1/\epsilon_{c}) \right)\right).
\end{equation}
$I$ is the number of searches, which is depended on the search step $\alpha$. Comparing \eqref{complprop} and \eqref{complxqiong}, we can find that the partial exhaustive search algorithm has a lower complexity than the one in the previous section.

\begin{algorithm}[tp]
	\caption{The Partial Exhaustive Search Algorithm}\label{Alg4}
	\begin{algorithmic}[1]
		\STATE{{\bf Initialization} $P_{opt} = 10000, \alpha_{opt} = 0, \mathbf{w}_k^*,  \mathbf{e}_k^*, \forall k$}
		\FOR{ $\alpha = 0.1:0.1:0.9$ }
		\STATE {{\bf Initialization} $\epsilon = 0.001, i = 0, \mathbf{e}_k^{(0)}$}
		\WHILE {$\sum\limits_{k=1}^K {\rm Tr}(\mathbf{W}_k^{(i)}) - \sum\limits_{k=1}^K {\rm Tr}(\mathbf{W}_k^{(i+1)}) > \epsilon$}
		\STATE{Update $\mathbf{W}_k^{(i)}, \forall k$ by solving P6}.
		\STATE{Update $\mathbf{w}_k^{(i)}, \forall k$ by decomposing $\mathbf{W}_k^{(i)}, \forall k$ based on Gaussian Randomization method.}
		\STATE {Update $\mathbf{V}_k^{(i)}, \forall k$ by solving P7 based on given $\mathbf{w}_k^{(i)}, \forall k$.}
		\STATE {Update $\mathbf{e}_k^{(i)}, \forall k$ by decomposing $\mathbf{V}_k^{(i)}, \forall k$ based on Gaussian Randomization method.}
		\STATE{$i = i+1.$}
		\ENDWHILE
		\IF{$P_{opt} > \sum\limits_{k=1}^K {\rm Tr}(\mathbf{W}_k^{(i)})$}
		\STATE{$P_{opt} > \sum\limits_{k=1}^K {\rm Tr}(\mathbf{W}_k^{(i)})$.}
		\STATE{$\alpha_{opt} = \alpha$, $\mathbf{w}_k^* = \mathbf{w}_k^{(i)}$, $\mathbf{e}_k^* = \mathbf{e}_k^{(i)}$, $\forall k$.}
		\ENDIF
		\ENDFOR
		\STATE{{\bf Output} $\alpha_{opt}, \mathbf{w}_k^*, \mathbf{e}_k^*, \forall k$.}
	\end{algorithmic}\label{Al1}
\end{algorithm}
\section{Numerical Results}
In this section, we evaluate all simulation results of the proposed algorithms. In simulations, channel gains are generated by
\begin{align}
\mathbf{h}_{k,e} = \frac{\mathbf{h}_{k,e}^*}{\sqrt{ d_0^{\alpha_0}}} \quad 
\mathbf{G}_k = \frac{\mathbf{G}_k^*}{\sqrt{d_1^{\alpha_1}}} \quad 
\mathbf{h}_{k,c} = \frac{\mathbf{h}_{k,c}^*}{\sqrt{ d_2^{\alpha_2}}} 
\end{align}
where $k = 1,2,...K$, and  $\mathbf{h}_{k,e}^*$, $\mathbf{G}_k^*$, $\mathbf{h}_{k,c}^*$ are complex Reyleigh channel coefficients. $d_0 = 10$ m, $d_1 = 50$ m, $d_2 = 10$ m, respectively denote the distances between the IRS and the cell edge user, the distance between the BS and the IRS, and the distance between the BS and the cell center user. $\alpha_0 ,\alpha_1 ,\alpha_2$ are the path loss exponents of the corresponding links. We assume that all the cell central users are at the same distance from the BS, all the cell edge users are at the same distance from the related IRS and all the IRSs are at the same distance from the BS. We set $\alpha_0 = \alpha_2 = 2$ and $\alpha_1 = 2.2$. The noise power is $\sigma^2 = BN_0$, where the bandwidth $B = 100$ MHz and the noise power spectral density is $N_0 = -80$ dBm. \par

Fig. \ref{Fig1} shows the transmit power at the BS versus the number of each IRS's reflecting elements. We provide the performance of the proposed schemes compared with the random phase scheme in NOMA and OMA. In Fig. \ref{Fig1}, the number of antennas at the BS is $M = 8$, and the date rate requirement of all the central users and the cell edge users is 1 bps/Hz. Obviously, the transmit power at the BS of all schemes decreases rapidly with the increasing of the number of IRS's reflecting elements. From Fig. \ref{Fig1}, we can see that both proposed algorithms requires a less transmit power than the benchmarks. Comparing the two proposed algorithms, the performance gap is very small at the beginning but is gradually enlarging when the number of reflecting elements at IRSs grows. The result in Fig. \ref{Fig1} demonstrates that the alternating algorithm can yield the best performance among the schemes shown in the figure. \par

Fig. \ref{Fig2} shows the transmit power at the BS versus the minimum data rate of the central users. In this figure, we assume that the central users' date rate requirement are the same, and all the cell edge users' date rates are set as 1bps/Hz. In this figure, we set the number of antennas at the BS as $M = 8$ and the number of reflecting elements at each IRS as $N = 32$ respectively. According to the Shannon's capacity formula, it is well known that a higher date rate requires a higher transmit power at the BS. All schemes in Fig. \ref{Fig2} have the same trend, where the transmit power at the BS increases with the increasing of the central users' minimum date rate. From Fig. \ref{Fig2}, we find that the proposed alternating algorithm needs less power consumption under the same date rate requirement. Although, the partial exhaustive search algorithm cannot achieve the same performance as the proposed alternating algorithm, it has low complexity and still yields a better performance than NOMA with random IRS scheme and OMA scheme. \par

\begin{figure}
	\centering
	\graphicspath{{./figures/}}\includegraphics[width=1.1\linewidth]{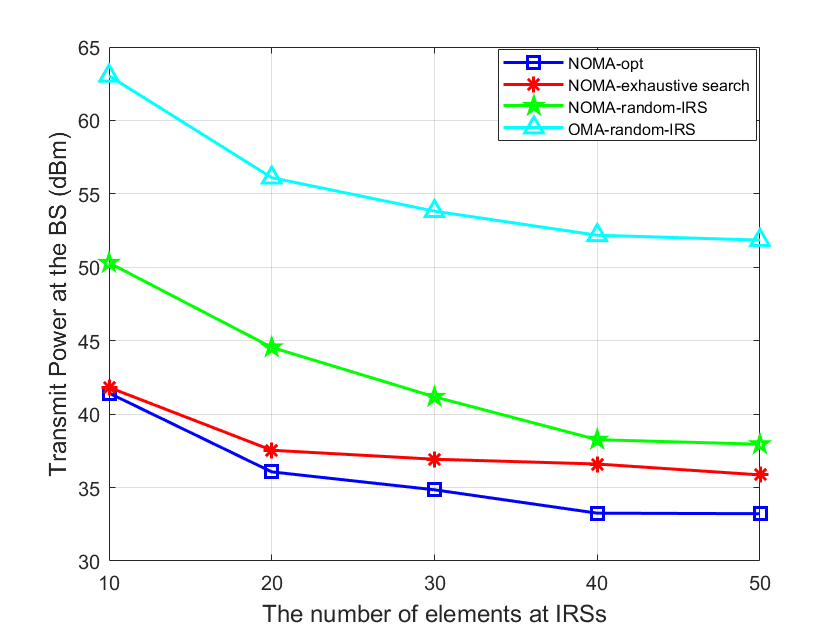}\\
	\caption{The transmit power versus the number of elements.} \label{Fig1}
\end{figure}

\begin{figure}
	\centering
	\graphicspath{{./figures/}}\includegraphics[width=1.1\linewidth]{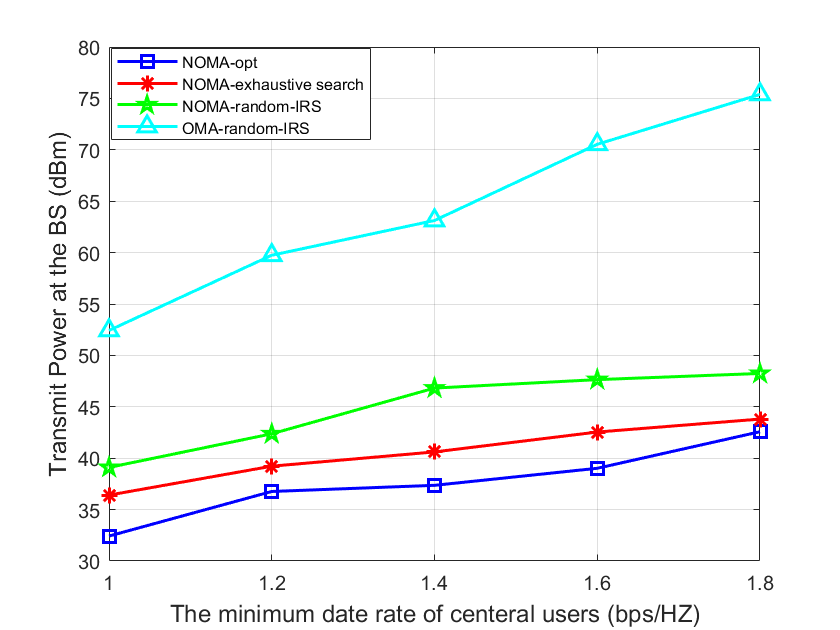}\\
	\caption{The transmit power versus the minimum date rate of the central users.} \label{Fig2}
\end{figure}

\begin{figure}
	\centering
	\graphicspath{{./figures/}}\includegraphics[width=1.1\linewidth]{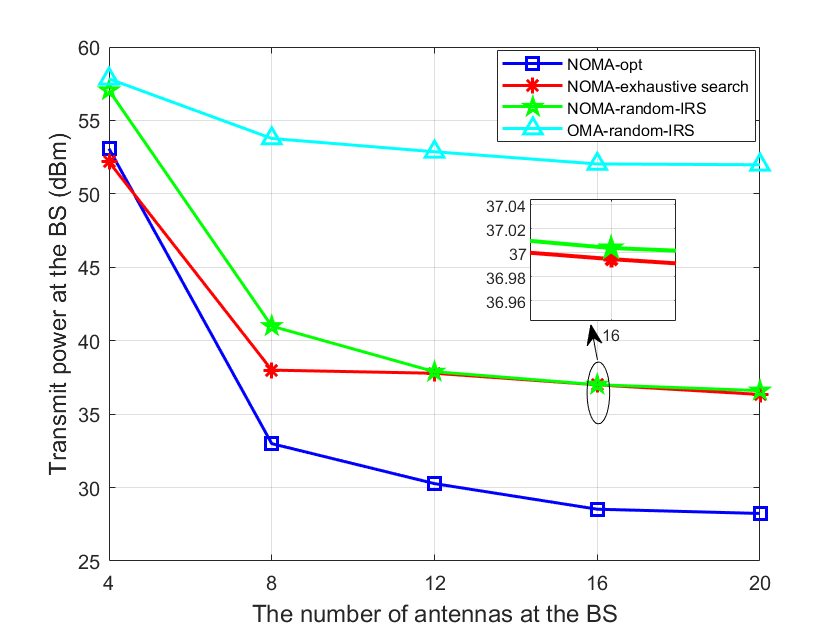}\\
	\caption{The transmit power versus the number of antennas at the BS.} \label{Fig3}
\end{figure}

\begin{figure}
	\centering
	\graphicspath{{./figures/}}\includegraphics[width=1.1\linewidth]{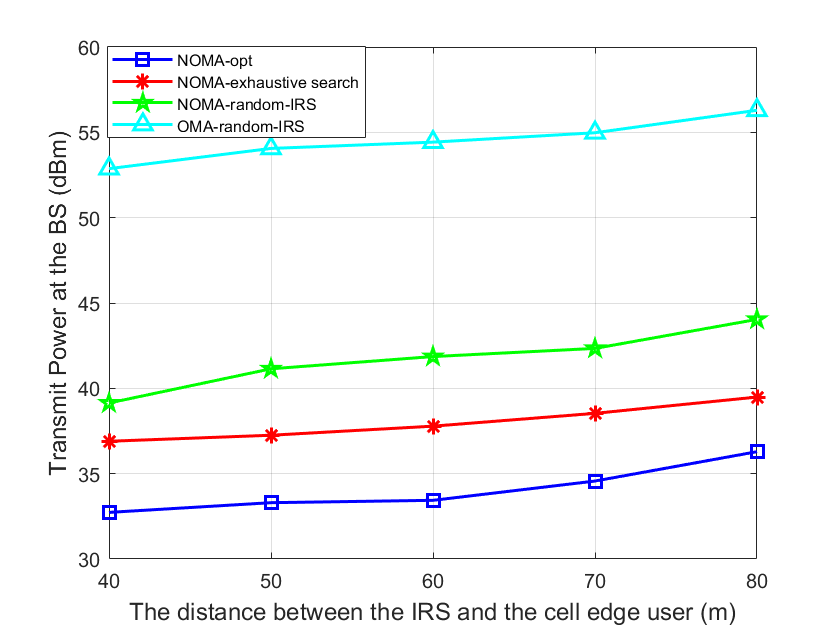}\\
	\caption{The transmit power versus the distance between the IRS and the cell edge user.} \label{Fig4}
\end{figure}

\begin{figure}
	\centering
	\graphicspath{{./figures/}}\includegraphics[width=1.1\linewidth]{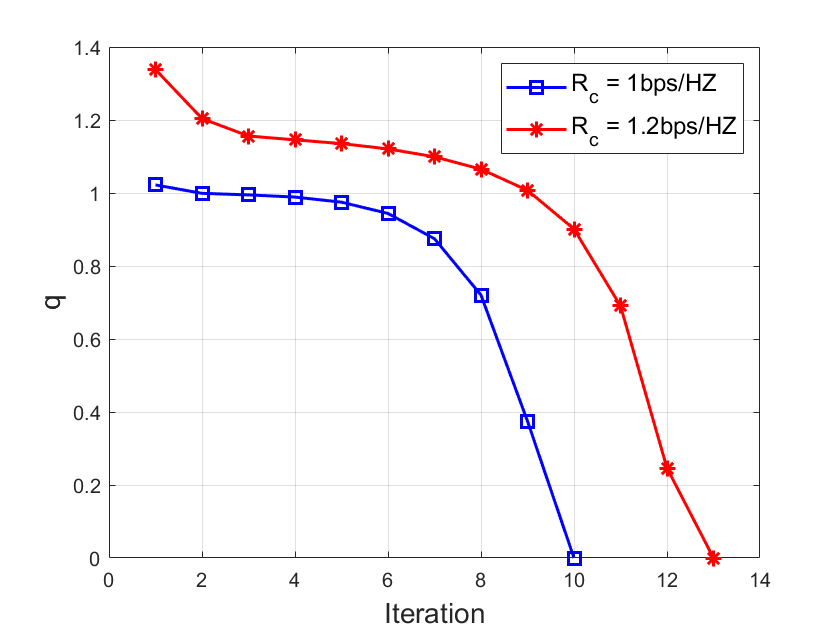}\\
	\caption{The value of $q$ versus the iterative number.} \label{Fig5}
\end{figure}

\begin{figure}
	\centering
	\graphicspath{{./figures/}}\includegraphics[width=1.1\linewidth]{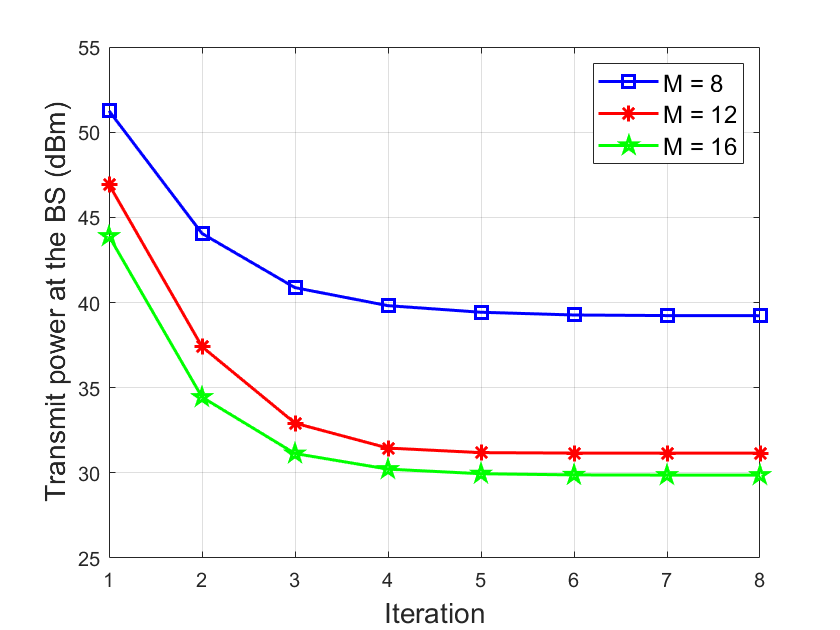}\\
	\caption{The transmit power at the BS versus the iterative number.} \label{Fig6}
\end{figure}
Fig. \ref{Fig3} shows the transmit power versus the number of antennas at the BS. Similar to Fig. \ref{Fig1}, we also compare the performance of two proposed algorithms and NOMA and OMA with random phase IRS. In Fig. \ref{Fig3}, we set the number of reflecting elements at each IRS as $N = 32$ and the date rate requirements of all users are chosen as 1 bps/Hz. Obviously, we can see the transmit power at the BS of all schemes decreases with an increasing number of antennas at the BS, and these two proposed algorithms are still better than the benchmarks. When the number of antennas at the BS is set as 4, the performance of the partial exhaustive search algorithm is slightly better than the alternating algorithm. However, the energy consumption of the proposed alternating algorithm drops faster than the partial exhaustive search algorithm with the number of base station's antennas increasing. It indicates that the proposed alternating algorithm performs better than the partial exhaustive search method in the most of cases of massive antennas. \par

Fig. \ref{Fig4} shows the transmit power versus the distance between the IRS and the cell edge user in each cluster. In Fig. \ref{Fig4}, we set the number of antennas at the BS as $M = 8$, the number of each IRS's reflecting elements as $N = 32$. Each users' minimum date rate is 1bps/Hz. As expected, the transmit power of all schemes increases when the distance between IRS and the cell edge user in each cluster gets large. Similar to Fig. \ref{Fig3}, the proposed alternating algorithm consumes less energy compared with all other schemes. \par

Fig. \ref{Fig5} shows the the value of $q$ in the initial point search algorithm versus the iterative number. As we discussed before, the $q$ represents the distance between the current problem and a feasible problem and $q$ can enforce the current problem to be a feasible one. In Fig. \ref{Fig4}, we can see that the value of $q$ in the $R_c = 1.2$ bps/Hz scheme is larger than that in the $R_c = 1$ bps/Hz scheme at each iteration. Moreover, the scheme with $R_c = 1.2$ bps/Hz needs more iterations to converge, which indicates that a higher date rate makes all constraints more difficult to be fulfilled. \par

Fig. \ref{Fig6} shows the transmit power at the BS versus the iterative number in Algorithm 2. We evaluate the transmit power in different scenarios with the different numbers of antennas at the BS. We set all users' data rate as 1 bps/Hz and the number of each IRS as $N = 32$. From Fig. \ref{Fig6}, we find that the transmit power at the BS decreases with the number of iterations increasing, which also means this algorithm can converge with the algorithm proceeding. 

\section{Conclusion}
In this paper, we investigated joint optimization of beamforiming, power allocation and IRS phase shift in a NOMA-IRS assisted multi-cluster network. By introducing inequality approximation, SCA and SDR, we proposed an alternating algorithm to minimize the transmit power by iteratively solving beamforming optimization and phase shifting feasibility until the algorithm converges. Furthermore, we proposed an initial point search algorithm to guarantee the feasibility of the beamforming optimization subproblem. Moreover, to reduce the complexity of the proposed algorithm, we also provided a low-complexity solution for this scenario based on the partial exhaustive search. The simulation results demonstrated the alternating algorithm outperforms the simplified partial exhaustive search algorithm but has a higher complexity. In the future research, the IRS reconfigures the imperfect channel will be studied and the inter-cluster interference caused by IRS will also be considered. 

\bibliographystyle{IEEEtran}

\begin{thebibliography}{10}
\providecommand{\url}[1]{#1}
\csname url@samestyle\endcsname
\providecommand{\newblock}{\relax}
\providecommand{\bibinfo}[2]{#2}
\providecommand{\BIBentrySTDinterwordspacing}{\spaceskip=0pt\relax}
\providecommand{\BIBentryALTinterwordstretchfactor}{4}
\providecommand{\BIBentryALTinterwordspacing}{\spaceskip=\fontdimen2\font plus
\BIBentryALTinterwordstretchfactor\fontdimen3\font minus
  \fontdimen4\font\relax}
\providecommand{\BIBforeignlanguage}[2]{{%
\expandafter\ifx\csname l@#1\endcsname\relax
\typeout{** WARNING: IEEEtran.bst: No hyphenation pattern has been}%
\typeout{** loaded for the language `#1'. Using the pattern for}%
\typeout{** the default language instead.}%
\else
\language=\csname l@#1\endcsname
\fi
#2}}
\providecommand{\BIBdecl}{\relax}
\BIBdecl

\bibitem{david20186g}
K.~David and H.~Berndt, ``6{G} vision and requirements: Is there any need for
  beyond 5{G}?'' \emph{IEEE Veh. Technol. Mag.}, vol.~13, no.~3, pp. 72--80,
  2018.

\bibitem{saad2019vision}
W.~Saad, M.~Bennis, and M.~Chen, ``A vision of 6{G} wireless systems:
  Applications, trends, technologies, and open research problems,'' \emph{IEEE
  Netw.}, vol.~34, no.~3, pp. 134--142, 2019.

\bibitem{tariq2019speculative}
F.~Tariq, M.~Khandaker, K.-K. Wong, M.~Imran, M.~Bennis, and M.~Debbah, ``A
  speculative study on 6{G},'' \emph{arXiv preprint arXiv:1902.06700}, 2019.

\bibitem{jamali2019intelligent}
V.~Jamali, A.~Tulino, G.~Fischer, R.~M{\"u}ller, and R.~Schober, ``Intelligent
  reflecting and transmitting surface aided millimeter wave massive {MIMO},''
  \emph{arXiv e-prints}, 2019.

\bibitem{larsson2014massive}
E.~G. Larsson, O.~Edfors, F.~Tufvesson, and T.~L. Marzetta, ``Massive {MIMO}
  for next generation wireless systems,'' \emph{IEEE Commun Mag}, vol.~52,
  no.~2, pp. 186--195, 2014.

\bibitem{guo2016method}
Y.-H. Guo, ``Method and apparatus of small cell enhancement in a wireless
  communication system,'' Mar.~22 2016, uS Patent 9,295,077.

\bibitem{ding2015application}
Z.~Ding, F.~Adachi, and H.~V. Poor, ``The application of {MIMO} to
  non-orthogonal multiple access,'' \emph{EEE Trans. Wirel. Commun.}, vol.~15,
  no.~1, pp. 537--552, 2015.

\bibitem{dai2018survey}
L.~Dai, B.~Wang, Z.~Ding, Z.~Wang, S.~Chen, and L.~Hanzo, ``A survey of
  non-orthogonal multiple access for 5{G},'' \emph{IEEE Commun. Surv. Tutor.},
  vol.~20, no.~3, pp. 2294--2323, 2018.

\bibitem{saito2015performance}
K.~Saito, A.~Benjebbour, Y.~Kishiyama, Y.~Okumura, and T.~Nakamura,
  ``Performance and design of {SIC} receiver for downlink {NOMA} with open-loop
  {SU-MIMO},'' in \emph{2015 IEEE International Conference on Communication
  Workshop (ICCW)}.\hskip 1em plus 0.5em minus 0.4em\relax IEEE, 2015, pp.
  1161--1165.

\bibitem{wu2018intelligent}
Q.~Wu and R.~Zhang, ``Intelligent reflecting surface enhanced wireless network:
  Joint active and passive beamforming design,'' in \emph{2018 IEEE Global
  Communications Conference (GLOBECOM)}.\hskip 1em plus 0.5em minus 0.4em\relax
  IEEE, 2018, pp. 1--6.

\bibitem{sun2018joint}
X.~Sun, N.~Yang, S.~Yan, Z.~Ding, D.~W.~K. Ng, C.~Shen, and Z.~Zhong, ``Joint
  beamforming and power allocation in downlink {NOMA} multiuser {MIMO}
  networks,'' \emph{EEE Trans. Wirel. Commun.}, vol.~17, no.~8, pp. 5367--5381,
  2018.

\bibitem{fang2017joint}
F.~Fang, H.~Zhang, J.~Cheng, S.~Roy, and V.~C. Leung, ``Joint user scheduling
  and power allocation optimization for energy-efficient {NOMA} systems with
  imperfect {CSI},'' \emph{IEEE J. Sel. Areas Commun.}, vol.~35, no.~12, pp.
  2874--2885, 2017.

\bibitem{surabhi2019diversity}
G.~Surabhi, R.~M. Augustine, and A.~Chockalingam, ``On the diversity of uncoded
  {OTFS} modulation in doubly-dispersive channels,'' \emph{EEE Trans. Wirel.
  Commun.}, vol.~18, no.~6, pp. 3049--3063, 2019.

\bibitem{tang2019wireless}
W.~Tang, M.~Z. Chen, J.~Y. Dai, Y.~Zeng, X.~Zhao, S.~Jin, Q.~Cheng, and T.~J.
  Cui, ``Wireless communications with programmable metasurface: New paradigms,
  opportunities, and challenges on transceiver design,'' \emph{arXiv preprint
  arXiv:1907.01956}, 2019.

\bibitem{fang2016energy}
F.~Fang, H.~Zhang, J.~Cheng, and V.~C. Leung, ``Energy-efficient resource
  allocation for downlink non-orthogonal multiple access network,'' \emph{IEEE
  Trans. Commun.}, vol.~64, no.~9, pp. 3722--3732, 2016.

\bibitem{wu2019towards}
Q.~Wu and R.~Zhang, ``Towards smart and reconfigurable environment: Intelligent
  reflecting surface aided wireless network,'' \emph{IEEE Commun Mag}, 2019.

\bibitem{di2019smart}
M.~Di~Renzo, M.~Debbah, D.-T. Phan-Huy, A.~Zappone, M.-S. Alouini, C.~Yuen,
  V.~Sciancalepore, G.~C. Alexandropoulos, J.~Hoydis, H.~Gacanin \emph{et~al.},
  ``Smart radio environments empowered by reconfigurable {AI} meta-surfaces: an
  idea whose time has come,'' \emph{EURASIP J Wirel. Commun. Netw.}, vol. 2019,
  no.~1, pp. 1--20, 2019.

\bibitem{wu2019intelligent}
Q.~Wu and R.~Zhang, ``Intelligent reflecting surface enhanced wireless network
  via joint active and passive beamforming,'' \emph{IEEE Trans. Wirel.
  Commun.}, vol.~18, no.~11, pp. 5394--5409, 2019.

\bibitem{tang2019wireless1}
W.~Tang, M.~Z. Chen, X.~Chen, J.~Y. Dai, Y.~Han, M.~Di~Renzo, Y.~Zeng, S.~Jin,
  Q.~Cheng, and T.~J. Cui, ``Wireless communications with reconfigurable
  intelligent surface: Path loss modeling and experimental measurement,''
  \emph{arXiv preprint arXiv:1911.05326}, 2019.

\bibitem{ASIRSNOMAMag}
A.~S. {de Sena}, D.~{Carrillo}, F.~{Fang}, P.~H.~J. {Nardelli}, D.~B.
  da~{Costa}, U.~S. {Dias}, Z.~{Ding}, C.~B. {Papadias}, and W.~{Saad}, ``What
  role do intelligent reflecting surfaces play in non-orthogonal multiple
  access?'' \emph{IEEE Wireless Commun. Mag.}, July 2020 (Accept).

\bibitem{FangIRS2020}
F.~{Fang}, Y.~{Xu}, Q.-V. {Pham}, and Z.{Ding}, ``Energy-efficient design of
  {IRS-NOMA} networks,'' \emph{IEEE Trans. Veh. Technol}, Sept. 2020 (Accept).

\bibitem{zhu2019power}
J.~Zhu, Y.~Huang, J.~Wang, K.~Navaie, and Z.~Ding, ``Power efficient
  {IRS}-assisted {NOMA},'' \emph{arXiv preprint arXiv:1912.11768}, 2019.

\bibitem{ding2020impact}
Z.~Ding, R.~Schober, and H.~V. Poor, ``On the impact of phase shifting designs
  on {IRS-NOMA},'' \emph{arXiv preprint arXiv:2001.10909}, 2020.

\bibitem{fu2019reconfigurable}
M.~Fu, Y.~Zhou, and Y.~Shi, ``Reconfigurable intelligent surface empowered
  downlink non-orthogonal multiple access,'' \emph{arXiv preprint
  arXiv:1910.07361}, 2019.

\bibitem{liu2020ris}
X.~Liu, Y.~Liu, Y.~Chen, and H.~V. Poor, ``{RIS} enhanced massive
  non-orthogonal multiple access networks: Deployment and passive beamforming
  design,'' \emph{arXiv preprint arXiv:2001.10363}, 2020.

\bibitem{mu2019exploiting}
X.~Mu, Y.~Liu, L.~Guo, J.~Lin, and N.~Al-Dhahir, ``Exploiting intelligent
  reflecting surfaces in multi-antenna aided {NOMA} systems,'' \emph{arXiv
  preprint arXiv:1910.13636}, 2019.

\bibitem{zuo2020resource}
J.~Zuo, Y.~Liu, Z.~Qin, and N.~Al-Dhahir, ``Resource allocation in intelligent
  reflecting surface assisted {NOMA} systems,'' \emph{arXiv preprint
  arXiv:2002.01765}, 2020.

\bibitem{zeng2020sum}
M.~Zeng, X.~Li, G.~Li, W.~Hao, and O.~Dobre, ``Sum rate maximization for
  {IRS}-assisted uplink {NOMA},'' \emph{arXiv preprint arXiv:2004.10791}, 2020.

\bibitem{li2019joint}
Y.~Li, M.~Jiang, Q.~Zhang, and J.~Qin, ``Joint beamforming design in
  multi-cluster {MISO} {NOMA} intelligent reflecting surface-aided downlink
  communication networks,'' \emph{arXiv preprint arXiv:1909.06972}, 2019.

\bibitem{ni2020resource}
W.~Ni, X.~Liu, Y.~Liu, H.~Tian, and Y.~Chen, ``Resource allocation for
  multi-cell irs-aided {NOMA} networks,'' \emph{arXiv preprint
  arXiv:2006.11811}, 2020.

\bibitem{abeywickrama2020intelligent}
S.~Abeywickrama, R.~Zhang, Q.~Wu, and C.~Yuen, ``Intelligent reflecting
  surface: Practical phase shift model and beamforming optimization,''
  \emph{arXiv preprint arXiv:2002.10112}, 2020.

\bibitem{boyd2004convex}
S.~Boyd, S.~P. Boyd, and L.~Vandenberghe, \emph{Convex optimization}.\hskip 1em
  plus 0.5em minus 0.4em\relax Cambridge university press, 2004.

\bibitem{luo2010semidefinite}
Z.-Q. Luo, W.-K. Ma, A.~M.-C. So, Y.~Ye, and S.~Zhang, ``Semidefinite
  relaxation of quadratic optimization problems,'' \emph{IEEE Signal Process
  Mag}, vol.~27, no.~3, pp. 20--34, 2010.

\end{thebibliography}

\vspace{0.5em}
\end{document}